\providecommand{\U}[1]{\protect\rule{.1in}{.1in}}
\theoremstyle{plain}
\newtheorem{theorem}{Theorem}
\newtheorem{lemma}{Lemma}
\newtheorem{proposition}{Proposition}
\newtheorem{problem}{Problem}
\theoremstyle{definition}
\newtheorem{definition}{Definition}
\newtheorem{condition}{Condition}
\newcommand{\C}{\mathbb{C}}
\newcommand{\abs}[1]{\vert #1 \vert}
\newcommand{\norm}[1]{\Vert #1 \Vert}
\newcommand{\sse}{\subseteq}
\newcommand{\geqsim}{\gtrsim}
\newcommand{\leqsim}{\lesssim}
\DeclareMathOperator{\argmin}{argmin}
\begin{document}

\title{\scalebox{0.92}{Hierarchical Sparse Channel Estimation for Massive MIMO}}
\author{\setstretch{0.8} \scalebox{0.95}{Gerhard Wunder$^{1}$, Ingo Roth$^{2}$, Axel Flinth$^{3}$, Mahdi Barzegar$^{4}
		$, Saeid Haghighatshoar$^{4}$, Giuseppe Caire$^{4}$, Gitta
		Kutyniok$^{3}$}\\
	\scalebox{0.95}{$^{1}$Heisenberg Communications and Information Theory Group, $^{2}$Quantum Information Theory Group, FU Berlin}\\
\scalebox{0.95}{$^{3}$Applied Functional Analysis Group, $^{4}$Communications and Information Theory Group, TU Berlin}}

\maketitle

\begin{abstract}
The problem of wideband massive MIMO channel estimation is considered. Targeting for low complexity algorithms as well as small training overhead, a compressive sensing (CS) approach is pursued. Unfortunately, due to the Kronecker-type sensing (measurement) matrix corresponding to this setup, application of standard CS algorithms and analysis methodology does not apply. By recognizing that the channel possesses a special structure, termed hierarchical sparsity, we propose an efficient algorithm that explicitly takes into account this property. In addition, by extending the standard CS analysis methodology to hierarchical sparse vectors, we provide a rigorous analysis of the algorithm performance in terms of estimation error as well as number of pilot subcarriers required to achieve it. Small training overhead, in turn, means higher number of supported users in a cell and potentially improved pilot decontamination. We believe, that this is the first paper that draws a rigorous connection between the hierarchical framework and Kronecker measurements. Numerical results verify the advantage of employing the proposed approach in this setting instead of standard CS algorithms.
\end{abstract}

\section{Introduction}

Massive MIMO, i.e. deploying large number of antennas at the base station, is
a key technology for 5G \cite{5g2017_JSAC}. Although its benefits are by now
well understood and documented, the critical bottleneck for massive MIMO
deployment is still the acquisition of channel state information (CSI), with
designs attempting to balance the conflicting requirements of training
overhead and co-pilot contamination reduction, in addition to the standard
requirement of computational efficiency. This problem is even more important
in massive machine type communications (MTC) where accurate CSI with
low-training overhead is of critical importance
\cite{Wunder2015_GC,Wunder2017_ASILOMAR}.

Towards addressing these issues, one major line of works applies compressed
sensing (CS) techniques in order to account for and exploit the sparsity
properties of the wireless channel. The typical approach comes in two stages:
first the (spatial) covariance matrix of the received signal is estimated and, second, the dedicated
user channels are estimated within the estimated spatial subspaces.
This approach has been mostly applied to the narrowband signaling case,
exploiting the channel sparsity in the, so called, angle domain (see
\cite{Haghighatshoar2017_TSP} for an overview). Extensions of CS techniques to
the wideband (OFDM) massive MIMO channel have been recently considered in
\cite{ChenYang206_TWC,Haghighatshoar2017_TWC,Swindlehurst2016_TSP}. One
intuitive motivation for such an approach is that, in the wideband setting, there are more degrees of freedom available so that pilot contamination induced by co-pilots in
other cells is effectively combated.

 Initial work has been provided in
\cite{ChenYang206_TWC}, arguing that with large enough resolution in angular
and time domain, controlled by the number of antennas and subcarriers, the user
channels become approximately mutually orthogonal so that pilot contamination
diminishes. Notably, the subspace estimates are obtained through pilot
coordination over multiple slots, but not primarily through sparse channel
estimation. Moreover, the claimed orthogonality is not rigorous and crucially
depends on the parameter setting, let alone that bandwidth (spanned by the system subcarriers)
is not a free design parameter that can be set arbitrarily large. In \cite{Haghighatshoar2017_TWC}, a
computationally efficient CS-based CSI acquisition approach is proposed,
utilizing observations from only a limited number of subcarriers and antennas.
Moreover, a fine resolution of angle and time domain together with a sparse
subspace tracking algorithm is proposed. The resulting algorithm shows good
performance and is exploited for pilot decontamination purposes. A similar
setting is also considered in \cite{Swindlehurst2016_TSP}, where a minimimum
mean squared error (MMSE) channel estimator is proposed, however, requiring
prior knowledge of the channel second order statistics.

Generally, a two stage approach may require excessive observations in time in
order to obtain an accurate estimate of the received signal covariance matrix,
which may be unacceptable, e.g., in massive MTC setting. Another major problem is that
the proposed CS algorithms lack rigorous performance analysis in terms of
estimation error and, equally important, number of utilized subcarriers in
order to achieve it. This is mainly due to the specific Kronecker-like
measurement structure of the equivalent CS problem, where, as we will show,
the classical CS assumptions fail to hold, entailing convergence issues and
leakage effects. However, targeting low complexity algorithms as well as
small training overhead, say, in future MTC applications, it is imperative to
understand such structures and to derive a tailored algorithmic framework.

\textbf{Contributions}. We propose an efficient "one stage" uplink massive
MIMO wideband (OFDM) channel estimation taking into account only the sparsity
of the wideband channel into account without any additional prior knowledge.
In particular, we identify that the channel estimation problem can be posed as
the identification of a vector that is \emph{hierarchically sparse}, i.e., it
is not only sparse but its support possesses certain structural properties.
This property, together with the Kronecker-like measurement, is taken into
account for the design of an efficient CS-inspired algorithm tailored for this
particular setup. In addition, by extending the standard CS analysis
methodology to hierarchical sparse vectors, we provide a rigorous analysis of
the algorithm performance in terms of estimation error as well as number of
pilot subcarriers required to achieve it. We believe, that this is the first
paper that draws a rigorous connection between the hierarchical framework and
Kronecker measurements. Preliminary simulations show that exploitation of the
hierarchical sparsity property leads to improved estimation performance
compared to standard CS algorithms that ignore this property. Even worse,
standard algorithms may completely fail in some extreme parameter settings.
Obviously, this might have some profound impact on system parameters such as
pilot signal design, user capacity per cell etc.

\textbf{Basic notations}. $\lVert x\rVert_{\ell_{q}}:=(\sum_{i}|x_{i}%
|^{q})^{1/q}$, $q>0$, is the usual notion of $\ell_{q}$-norms, $\lVert
x\rVert:=\lVert x\rVert_{\ell_{2}}$ and $\lVert X\rVert$ is the Frobenius norm
of matrix $X$. $\mathcal{A}$ denotes a set of cardinality $|\mathcal{A}|$ and
$[N]$ denotes $\left\{  0,1,...,N-1\right\}  $. The elements of a
vector/sequence $x$ are denoted as $(x)_{i}$ (or simply $x_{i}$ if it clear
from the context). Vector $x_{\mathcal{A}}$ (matrix $X_{\mathcal{A}}$) is the
projection of elements (rows) of vector $x$ (matrix $X$) onto $\C^{\mathcal{A}%
}$. $\odot$ means point-wise (Hadamard) product, $I_{n}$ is the $n\times n$
identity matrix, $\text{diag}(x)$ is the diagonal matrix with $x\in
\mathbb{C}^{n}$ on its diagonal. $A^{H/T}$ is the Hermitian/transpose of
matrix $A$. The $N\times D$ ($N\times N$) DFT matrix is denoted as $F_{N,D}$
($F_{N,N}=F_{N}$) with $(F_{N,D})_{m,n}:=e^{-j2\pi mn/N},m\in\lbrack
N],n\in\lbrack D]$. $\mathcal{CN}\left(  0,\sigma^{2}I_{n}\right)  $ denotes
the multivariate complex Gaussian distribution of zero mean and covariance
matrix $\sigma^{2}I_{n}$. A vector $x\in\mathbb{C}^{N}$ is called $s$-sparse
if it consists of at most $s$ non-zero elements. The set of non-zero elements
(support) of $x\in\mathbb{C}^{N}$ is denoted as $\text{supp}(x)$.

\textbf{Organization of the paper:} First, we describe our signal model and
formulate the channel recovery problem. Next, we briefly present the recent
framework of hierarchical sparsity, which was developed by two of the authors
together with co-authors in \cite{Roth2016_TSP}. This framework is applied to
the channel estimation problem under the assumption of, so called, on-grid
channel parameters, and an efficient estimation algorithm is proposed. Numerical results on the performance of the algorithm are presented, demonstrating that highly accurate channel estimation can be achieved with a very small training overhead.  The
more general (and practical) case of off-grid channel parameters is then
considered where it is shown that the framework also applies with certain
modifications that take into account basis mismatch effects. 

\section{Wideband Signal Model}

We consider an uplink massive MIMO OFDM wideband channel with single-antenna
users and $M\gg1$ antenna elements at the base station corresponding to an
array manifold $a\left(  \cdot \right)  :\left[  0,\pi\right)  \rightarrow
\mathbb{C}^{M}$, which maps angular to spatial domain. Considering a uniform
linear array (ULA), the array manifold is given by
$a\left(  \phi\right)  =(1,e^{-j2\pi d\sin\phi},\ldots,e^{-j2\pi d\left(
M-1\right)  \sin\phi})^{T}$. Here, $d$ is normalized spatial separation of the
ULA, which without loss of generality (w.l.o.g.) is assumed equal to $1$ in the following. As is routinely done, we perform the change of variable $\theta
=\sin(\phi) \in [0,1)$ and, with a slight abuse of notation, we write the array manifold as a function of $2 \pi \theta$. Considering a discretized approximation of the
interval $[0,2\pi)$ by the $M$ points $\ \{k2\pi/M\}_{k=0}^{M-1}$, yields the
steering matrix $A_{\theta}:=\left[  a(0),a(1/M)\ldots,a({(M-1)/M})\right]
=F_{M}\in\mathbb{C}^{M\times M}$.

Further, suppose there are $N\gg1$ OFDM subcarriers located at the (angular)
frequencies $\omega_{0},\omega_{1},\ldots,\omega_{N-1}$, where $\omega
_{k}:=2\pi k/T_{s}$, with $T_{s}$ being the OFDM symbol duration. Assuming
that the maximum delay spread of the channel for all antennas is no greater
than a fraction $\alpha T_{s},\alpha\leq1$, which is the case in any
reasonable OFDM design, the \textquotedblleft delay manifold\textquotedblright%
\ $b\left(  \cdot \right)  :\left[  0,\alpha T_{s}\right]  \rightarrow
\mathbb{C}^{N}$ is defined as $b\left(  \tau\right)  :=(e^{-j\omega_{0}\tau
}\;e^{-j\omega_{1}\tau}...e^{-j\omega_{N-1}\tau})^{T}$, which maps the delay
to the frequency domain. Considering a discretized approximation of
$[0,T_{s}]$ by the $N$ points $\{kT_{s}/N\}_{k=0}^{N-1}$, yields the steering
matrix $A_{\tau}:=\left[  b(0),b(T_{s}/N),\ldots,b((D-1)T_{s}/N)\right]
=F_{N,D}\in\mathbb{C}^{N\times D}$ where sample number $D\leq N$ is the
discrete delay spread\footnote{In general, a denser discretized approximation
for the angle and delay domains could be employed. We leave investigation of
this case for future work.}.

The channel of any user is a superposition of a small number $L$ of impinging wavefronts
(paths) characterized by their delay/angle pairs $\{(\tau_{p},\theta
_{p})\}_{p=0}^{L-1}$, with $\tau_{p}\in\lbrack0,\alpha T_{s}]$, $\theta_{p}%
\in\lbrack0,1)$, whose values are assumed to remain constant during the
considered transmission interval. The channel \textquotedblleft
spatial-frequency\textquotedblright\ transfer matrix $H\left(  t\right)
\in\mathbb{C}^{N\times M}$ of an arbitrary user corresponding the OFDM symbol
(slot) index $t\in\mathbb{Z}$ can be then be written as \cite{ChenYang206_TWC}%
\begin{equation} {\label{eq:channel matrix}}
H\left(  t\right)  =\sum_{p=0}^{L-1}\rho_{p}\left(  t\right)  b\left(
\tau_{p}\right)  a^{H}\left(  \theta_{p}\right)  ,
\end{equation}
where $\rho_{p}(t)\in\mathbb{C}$ is the time-varying gain of the $p$-th path
at slot $t$.

Assuming that the base station observes $T$ consecutive (pilot) OFDM slots dedicated for channel estimation, the
overall received signal equals
\begin{equation} 
Y\left(  t\right)  =\text{diag}(c(t))H(t)+Z\left(  t\right)  ,\;t\in\lbrack
T].\label{eq:observed_signal}%
\end{equation}
The matrix $Z\left(  t\right)  \in\mathbb{C}^{N\times M}$ represents noise
with independent, identically distributed elements as $\mathcal{CN}%
(0,\sigma^{2})$. Vector $c(t):=\left(  c\left(  t,\omega_{0}\right)
,...,c(t,\omega_{N-1})\right)  \in\mathbb{C}^{N}$ contains the pilot symbols
transmitted over slot $t$. For the single-cell case considered in this paper,
it is reasonable to assume that users are assigned non-overlapping sets of
subcarriers to transmit their pilots on, which, w.l.o.g., are set to unity, i.e., $\text{diag}(c(t))=I_{N}$.
Leveraging the sparse nature of the wideband massive MIMO channel, estimation
of the channel of an arbitrary user can be achieved in principle by considering only
the observations from the $O_{\tau}\leq N$ pilot subcarriers the user in consideration utilizes
and $O_{\theta}\leq M$ antennas. Let $P_{\tau}\in\{0,1\}^{O_{\tau}\times N}$
and $P_{\theta}\in\{0,1\}^{O_{\theta}\times M}$ denote the corresponding
\emph{sampling} matrices in frequency and space dimensions, respectively. We consider a random sampling in frequency and space,
i.e., the $O_\tau$ pilot subcarriers are selected uniformly from the $N$ available subcarriers and similarly for the sampled antennas.


In principle, identification of the continuous-valued channel parameters
$\{(\rho_{p},\tau_{p},\theta_{p})\}_{p=0}^{L-1}$ can be obtained from the
low-dimensional sketches $\{P_{\tau}Y(t)P_{\theta}^{T}\}_{t=0}^{T-1}$ using
the, so called, (two-dimensional) super-resolution approach (see, e.g.,
\cite{heckel2014SuperResRadar}), which is an extension of the one dimensional
super-resolution approaches (see, e.g., \cite{tan2014direction},
\cite{BarzegarEtAl:2017}). Unfortunately, the numerical resolution of the
two-dimensional problem is computationally intensive, rendering this approach
practical only for scenarios with $O_\tau$ and $O_\theta$ up to about $10$ each.

Targeting low-complexity channel estimation, we consider a discretized
representation of the channel matrix, which translates the physical channel sparsity to
sparsity of an appropriately defined matrix that is to be identified by the
estimator. In particular, let us first assume that each delay/angle pair lies
exactly on the delay/angle grid corresponding to the steering matrices
$A_{\theta}$ and $A_{\tau}$, i.e., it holds $(\tau_{p},\theta_{p})=(k_{p}%
T_{s}/N,l_{p}2\pi/M)$ for some $k_{p}\in\lbrack N]$ and $l_{p}\in\lbrack M]$,
for all $p\in\lbrack L]$. It is easy to see that, in this case, $H(t)$ can be
written as
\begin{equation} 
H(t)=A_{\tau}W\left(  t\right)  A_{\theta}^{H}%
,\label{eq:channel_delay_angle_decomposition}%
\end{equation}
where
\begin{equation} {\label{eq: W on grid}}
W\left(  t\right)  :=\sum_{p=0}^{L-1}\rho_{p}(t)e_{k_{p}}e_{l_{p}}^{T}%
\in\mathbb{C}^{D\times M},
\end{equation}
with $e_{n}$ denoting the canonical basis vector of appropriate dimension with
the $n$-th element equal to $1$. Matrix $W(t)$ is the \emph{delay-angular
representation} of the channel at time $t$, which is a sparse matrix, with
only $L$ nonzero elements out of a total $MD$. Note that the set of non-zero
elements (support) of $W(t)$ is the same for each $t$, although $W(t_{1})\neq
W(t_{2})$, for $t_{1}\neq t_{2}$ due to the time-varying path gains.

The general case, i.e. when the delay/angle pairs are off the delay/angle
grid, is a bit more subtle as the discretized model leads to \emph{basis
mismatch} effects \cite{chi2011sensitivity}. In particular, even though the
representation of (\ref{eq:channel_delay_angle_decomposition}) is still valid,
the delay-angular representation matrix now equals
\begin{equation}
W(t)=\sum_{p=0}^{L-1}\rho_{p}(t)u_{\tau_{p}}u_{\theta_{p}}^{H}\in
\mathbb{C}^{D\times M},\label{eq:W_basis_mismatch}%
\end{equation}
where $u_{\tau_{p}}\in\mathbb{C}^{D\times1},u_{\theta_{p}}\in\mathbb{C}%
^{M\times1}$ are defined by the equations $b(\tau_{p})=A_{\tau}u_{\tau_{p}}$,
$a(\theta_{p})=A_{\theta}u_{\theta_{p}}$, respectively, for all $p\in\lbrack
L]$. In general, $W(t)$ no longer consists of only $L$ non-zero elements due
to energy leakage over the grid points \cite{chi2011sensitivity} (see Fig.
\ref{fig grid and off-grid W}). As we will see, it can however (under a
regularity condition) be approximated by a matrix with a so-called
\emph{hierarchical sparsity pattern}, for which recovery methods have recently
been developed in \cite{Roth2016_TSP}.

\begin{figure}[ptb]
\centering
\includegraphics[width=0.6\columnwidth]{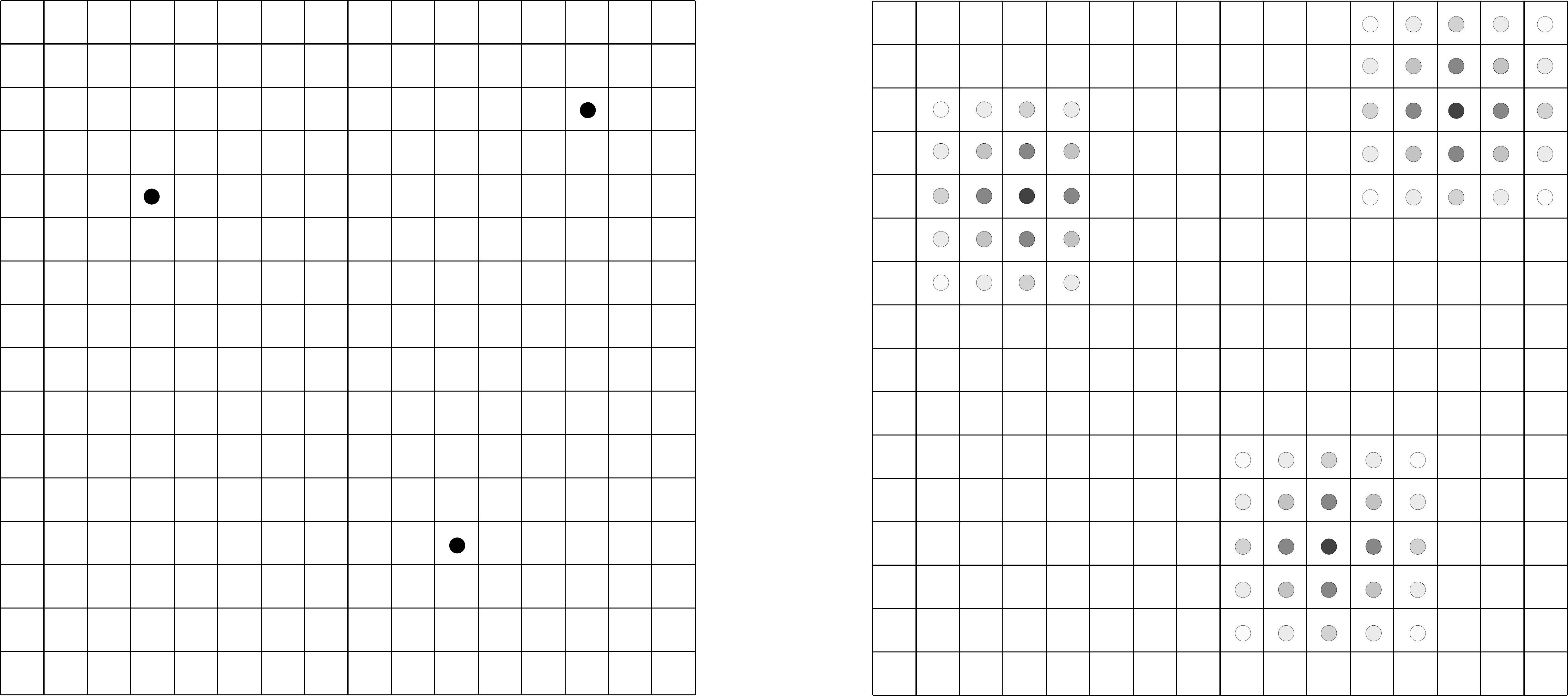}\caption{Example of
support of the delay-angular channel representation $W(t)$ for $L=3$ impinging
wavefronts. Left: on grid case, Right: off-grid case.}%
\label{fig grid and off-grid W}%
\end{figure}

\section{Problem Statement}

Considering the estimation of the channel of a single user, the problem can be
formulated as the estimation of the set of matrices $\{W(t)\}_{t=0}^{T-1}$
given the low-dimensional sketches $\{P_{\tau}Y(t)P_{\theta}^{T}\}_{t=0}%
^{T-1}$. For technical reasons, the observed matrices $\{Y(t)\}_{t=0}^{T-1}$ are normalized by
$1/\sqrt{O_\tau O_\theta}$, which, of course, occurs no information loss, and, for convenience, are vectorized,
resulting in
\begin{align}
X\left(  t\right)   &  :=\frac{1}{\sqrt{O_\tau O_\theta}}\text{vec}\left(  P_{\tau}Y(t)P_{\theta}^{T}\right)
\nonumber\\
&  =\Psi\text{vec}\left(  W\left(  t\right)  \right)  +\Phi\text{vec}\left(
Z\left(  t\right)  \right)  ,t\in\lbrack T],
\end{align}
where $\Phi:=(1/\sqrt{O_\tau O_\theta})P_{\theta}\otimes P_{\tau}\in\{0,1\}^{O\times MN}$ with
$O:=O_{\tau}O_{\theta}$ and
\begin{equation}
\Psi:=\left(  \sqrt{\frac{1}{O_{\theta}}}P_{\theta}A_{\theta}^{\ast}\right)
\otimes\left(  \sqrt{\frac{1}{O_{\tau}}}P_{\tau}A_{\tau}\right)  \in
\mathbb{C}^{O\times MD} \label{eq: Psi_definition}%
\end{equation}
is the, so called, \emph{sensing matrix} \cite{MathIntroToCS} for the
observations (measurements). By repeating this vectorization procedure over
the slot dimension as well, the problem can be formulated as follows.

\begin{problem}
Find a computationally efficient estimator of
\[
\bar{W}:=\emph{\text{vec}}\left(  \left[  \emph{\text{vec}}(W(0)),\ldots
,\emph{\text{vec}}(W(T-1))\right]  ^{T}\right)  \in\mathbb{C}^{MDT}%
\]
given the vector consisting of multiple measurements%
\[
\bar{X}:=\emph{\text{vec}}\left(  \left[  \emph{\text{vec}}(X(0)),\ldots
,\emph{\text{vec}}(X(T-1))\right]  ^{T}\right)  \in\mathbb{C}^{OT},
\]
under the linear model
\[
\bar{X}=\bar{\Psi}\bar{W}+\bar{Z},
\]
where $\bar{Z}:=\emph{\text{vec}}\left(  \left[ \Phi \emph{\text{vec}%
}(Z(0)),\ldots,\Phi \emph{\text{vec}}(Z(T-1))\right]  ^{T}\right)  $, $\bar{\Psi
}:=\Psi\otimes I_{T}\in\mathbb{C}^{OT\times MNT}$ with $\Psi$ as given in
(\ref{eq: Psi_definition}) and with the sampling matrices $P_{\tau}%
\in\{0,1\}^{O_{\tau}\times N}$ and $P_{\theta}\in\{0,1\}^{O_{\theta}\times M}$
generated by randomly and independently selecting $O_{\tau}$ and $O_{\theta}$
rows from the identity matrices $I_{N}$ and $I_{M}$, respectively. We also ask
for the scaling of required antennas $O_{\theta}$ and subcarriers $O_{\tau}$
for reliable (in a specific approximate sense that we will made precise later) recovery.
\end{problem}

Note that we have assumed random subsampling matrices in frequency and antenna
spaces, which, in turn, imply random sets of pilot subcarriers. Even though
not necessarily optimal, this random subsampling approach simplifies analysis
and is actually shown to achieve good performance.

A naive application of standard results from compressive sensing theory
\cite{MathIntroToCS} suggests that recovery is guaranteed as soon as
\begin{equation}
TO\gtrsim LT\log({T N M}).
\end{equation}
However, this result only holds as long as the sensing matrix of the problem
satisfies certain properties, e.g. the restricted isometry property (RIP)
\cite{MathIntroToCS}. As we will see, the Kronecker-like sensing matrix of
this problem \emph{might not} possess such properties in general, rendering a
different algorithm design and analysis methodology necessary. To this end,
the appropriate mathematical framework is developed in the next section by
exploiting the specific sparsity structure of $\bar{W}$.
\section{Algorithm Design Exploiting Structural Properties of Sparsity}

This section identifies important structural properties of the channel
matrices $\{W(t)\}_{t=0}^{T-1}$, which are taken into account for designing
efficient and accurate algorithms, as well as obtaining rigorous performance
guarantees. For this section, the case where the path/delay values of each
path lie exactly on the sampling grid is considered, i.e., $W(t),t\in[T]$,
consists of exactly $L$ non-zero elements. The general case will be treated in
the next section.

\subsection{Hierarchical Sparsity and Algorithm Design}

The fundamental observation towards an efficient channel estimation algorithm
is that the delay/angle channel representation is not only sparse, but
possesses certain structural properties as well, that fall within the notion of \emph{hierarhical sparsity}.

\begin{definition}
[Hierarchical sparsity] Let $\mathbf{s} = (s_{1}, \dots, s_{\ell})$ be $\ell
$-tuples of natural numbers and consider a vector $\tilde{X} \in
\mathbb{C}^{N_{1} N_{2} \cdots N_{l}}$, with integer $N_{i}\geq s_{i}%
,i\in\{1,2,\ldots, l\}$. We define the hierarchical $\mathbf{s}$-sparsity of
$\tilde{X}$ inductively as follows: For $l=1$ it is exactly the same as the
standard notion of sparsity (i.e., at most $s_{1}$ out of $N_{1}$ elements are
non-zero). For $l>1$, $\tilde{X}$ is called $\mathbf{s}$-sparse if it consists
of $N_{1}$ blocks out of which at most $s_{1}$ are non-zero and each of the
non-zero blocks is $(s_{2}, \dots, s_{\ell})$-sparse.
\end{definition}

\begin{figure}[t]
\centering
\includegraphics[width=6.5cm]{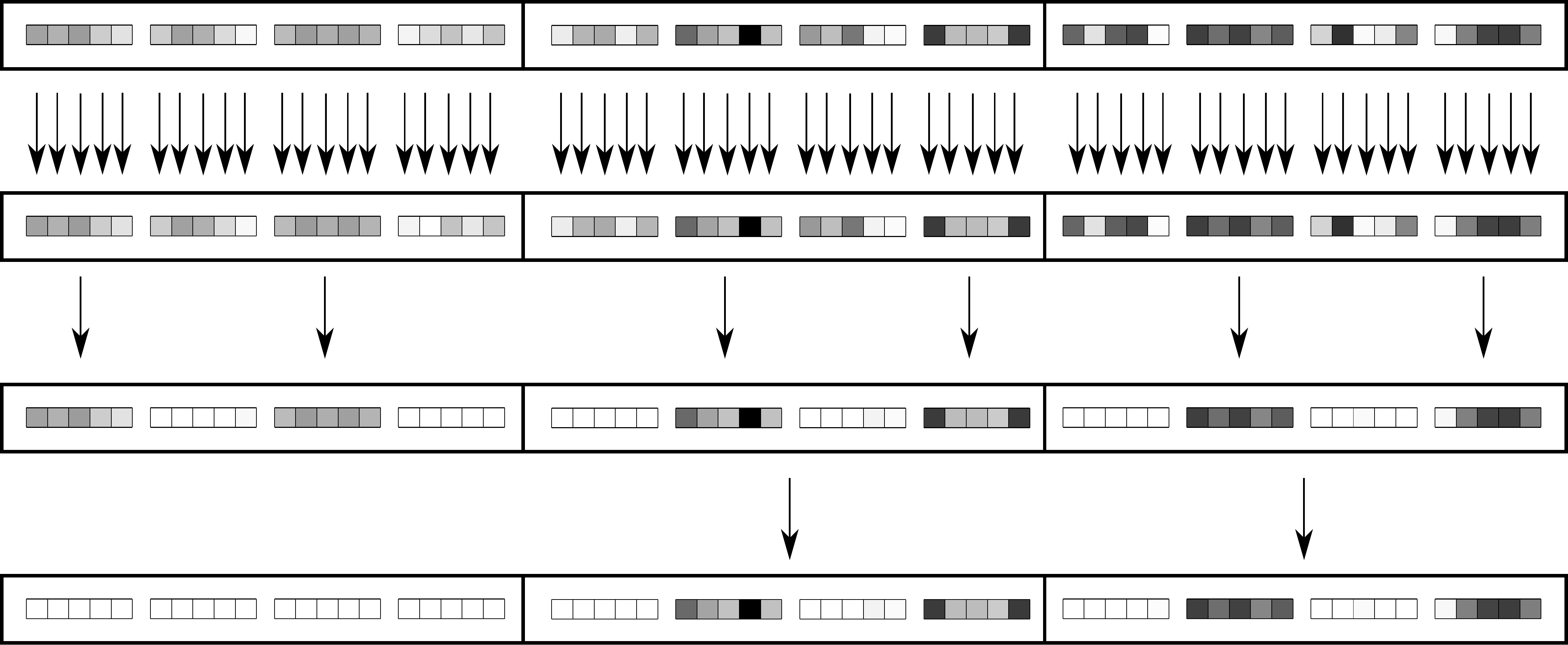}\caption{Calculation of the
action of the $\mathcal{T}_{2,2,5}$-operator on a vector $X \in\C^{ 3 \cdot4
\cdot5}$}%
\label{fig:Tappl}%
\end{figure}

It is easy to see that the unknown vector $\bar{W}\in\mathbb{C}^{MDT}$, defined
in Sec. III, is a level $l=3$ compound vector that is $(L,L,T)$-sparse.
Actually, for sufficiently large $M$, one may also expect that no more that
$K$ ($1\leq K\leq L$) paths have the same angle, implying further restricting
$\bar{W}$ as an $(L,K,T)$-sparse vector.

Clearly, the hierarchically sparse structure  $\bar{W}$ \emph{should be exploited} in algorithm design and analysis as it provides
significant restrictions on the support of $\bar{W}$, compared to the standard
notion of sparsity (which would characterize $\bar{W}$ simply as $TL$-sparse).
Towards this end, the low-complexity, hierarchical hard thresholding pursuit
(HiHTP) algorithm is considered here, originally proposed in
\cite{Roth2016_TSP} for the estimation of general hierarchically sparse
vectors from linear measurements. For the channel estimation problem, it takes
the following form.

\begin{algorithm}
[h] \caption{HiHTP} \label{alg:HiHTP}

\begin{algorithmic}
[1]

\REQUIRE measurement $\bar{X}$, sensing matrix $\bar{\Psi}$, $\left(
L,K,T\right)  $ hierarchical sparsity for $\bar{W}$

\STATE$\hat{\bar{W}}^{\left(  i\right)  }=0$

\REPEAT

\STATE$\mathcal{A}^{\left(  i+1\right)  }=\mathcal{T}_{(L,K,T)}\left(
\hat{\bar{W}}^{\left(  i\right)  }+\bar{\Psi}^{H}\left(  \bar{X}-\bar{\Psi}
\hat{\bar{W}}^{\left(  i\right)  }\right)  \right)  $

\STATE$\hat{\bar{W}}^{\left(  i+1\right)  }=\arg\min{}_{\substack{M\in
\mathbb{C}^{MNT} , \\\text{supp}(M)\subseteq\mathcal{A}^{\left(  i+1\right)
}}}\left\{  \Vert\bar{X}-\bar{\Psi} M\Vert\right\}  $

\UNTIL stopping criterion is met at $i=i^{\ast}$

\ENSURE$(L,K,T)$-sparse matrix $\hat{\bar{W}}^{\left(  i^{\ast}\right)  }$
\end{algorithmic}
\end{algorithm}

The algorithm follows the philosophy of model-based compressed
sensing~\cite{BaraniukEtAl:2010}: In each iteration, it first makes a gradient
descent step towards minimizing a least square objective, then projects the
resulting signal onto the $(L,K,T)$-sparse support containing the most of its
energy via application of operator $\mathcal{T}_{(L,K,T)}(\cdot)$, and
subsequently solves a least squares problem restricted to that support to find
the next iterate.

Utilization of the projection (or thresholding) operator $\mathcal{T}%
_{(L,K,T)}$ is the main differentiator of the HiHTP algorithm compared to the
standard HTP algorithm \cite{MathIntroToCS}. In particular, for any compound
vector $\tilde{X}\in\mathbb{C}^{N_{1}N_{2}\cdots N_{l}}$ and any
$\mathbf{s}:=(s_{1},s_{2},\ldots,s_{l})$, $\mathcal{T}_{\mathbf{s}}(\tilde
{X})$ is defined as
\begin{equation}
\mathcal{T}_{\mathbf{s}}(\tilde{X}):=\operatorname{supp}\operatorname*{arg}%
_{\mathbf{s}\text{-sparse }{\tilde{Z}}}\ \Vert\tilde{X}-\tilde{Z}\Vert.
\end{equation}
The action of this operator can be computed very efficiently. First, at the
lowest level, it selects the $s_{l}$ largest-magnitude entries (out of a total
$N_{l}$ entries) for each sub-block. Then, iteratively, at level $k<l$, it
selects the $s_{k}$ sub-blocks (out of a total $N_{k}$ sub-blocks) whose best
$(s_{k+1},\ldots,s_{l})$-sparse approximation are largest in $l_{2}$-norm. As
an example, the iterative calculation of $\mathcal{T}_{2,2,5}$ applied on a
vector $X\in\mathbb{C}^{3\cdot 4 \cdot 5}$ is illustrated in Fig. \ref{fig:Tappl}.

\subsection{Performance analysis}

Towards characterizing the algorithm performance, the authors of
\cite{Roth2016_TSP} introduced the concept of Hierarchical RIP (HiRIP) constant.

\begin{definition}
[HiRIP]\label{def:RIP} Given a matrix $\tilde{\Psi}\in\mathbb{C}^{O\times
N_{1} \cdots N_{\ell}}$ and a vector $\mathbf{s}=(s_{1},s_{2},\ldots,s_{l})$,
we denote by $\delta_{\mathbf{s}}$ the smallest $\delta\geq0$ such that
\begin{equation}
\label{eq:RIPequation}(1-\delta)\Vert\tilde{X}\Vert^{2}\leq\Vert\Psi\tilde
{X}\Vert^{2}\leq(1+\delta)\Vert\tilde{X}\Vert^{2},
\end{equation}
for all $\mathbf{s}$-sparse vectors $\tilde{X}\in C^{N_{1} \cdots N_{\ell}}$.
$\tilde{\Psi}$ is said to have the hierarchical RIP (HiRIP).
\end{definition}

Note the definition of HiRIP is less general than standard RIP: Since
$\mathbf{s}$-sparse vectors in particular are $s_{1} \cdots s_{\ell}$-sparse,
$s_{1} \cdots s_{\ell}$-RIP implies $\mathbf{s}$-HiRIP, whereas $\mathbf{s}%
$-HiRIP does not necessary imply $s_{1} \cdots s_{\ell}$-RIP. However, this
more restricted notion of RIP is well justified here as it explicitly takes
into account the hierarchical sparsity property of the vectors that are
considered in our problem.

Using the HiRIP concept, the following recovery guarantee for the HiHTP
algorithm can be stated.

\begin{theorem}
\label{thm:HiHTP} (Recovery guarantee \cite{Roth2016_TSP}) Let $3\times
(L,K,T):=(\min(3L,M),\min(3K,D),T)$ and suppose that the sensing matrix
$\bar{\Psi}$ in Algorithm \ref{alg:HiHTP} has a HiRIP constant
\begin{equation}
\delta_{3\times(L,K,T)}<\frac{1}{\sqrt{3}}. \label{eq:recGarant:RIP}%
\end{equation}
Then, the sequence of estimates $\{\hat{\bar{W}}^{\left(  i\right)  }\}$ of
the HiHTP algorithm satisfies
\begin{equation}
\Vert\bar{W}-\hat{\bar{W}}^{\left(  i+1\right)  }\Vert\leq\kappa^{i}\Vert
\bar{W}-\hat{\bar{W}}^{\left(  0\right)  }\Vert+\tau\Vert\bar{Z}\Vert,
\end{equation}
for any $i\geq0$, with
\begin{equation}
\kappa:=\left(  {\frac{2\delta_{3\times(L,K,T)}}{1-\delta_{3\times(L,K,T)}%
^{2}}}\right)  ^{1/2}<1
\end{equation}
and $\tau\leq5.15/(1-\kappa)$.
\end{theorem}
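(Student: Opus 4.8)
The plan is to adapt the classical convergence proof of the hard thresholding pursuit (HTP) algorithm \cite{MathIntroToCS} to the hierarchical setting, replacing ordinary hard thresholding by the hierarchical projection $\mathcal{T}_{(L,K,T)}$. The entire analytic skeleton of the HTP proof is driven by just two ingredients: an optimality property of the thresholding step and the fact that the sparse supports occurring in the argument stay sparse at a controlled level under unions. I would therefore first isolate these two structural facts for hierarchical sparsity and then re-run the classical recursion essentially verbatim.

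\textbf{Structural facts.} First, I would record that $\mathcal{T}_{(L,K,T)}(u)$ returns the support of a \emph{best} $(L,K,T)$-sparse approximation of $u$ in $\ell_2$, i.e.\ $\Vert u - u_{\mathcal{T}_{(L,K,T)}(u)}\Vert \le \Vert u - v\Vert$ for every $(L,K,T)$-sparse $v$. This requires verifying that the nested greedy selection described after Algorithm~\ref{alg:HiHTP} -- keep the $T$ largest entries in each innermost block, then the $K$ sub-blocks whose best approximation is largest, then the $L$ blocks -- is \emph{globally} optimal; this is the content of the corresponding lemma in \cite{Roth2016_TSP} and is the only genuinely combinatorial point. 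Second, I would note that the hierarchical union of two $(L,K,T)$-sparse supports is $(2L,2K,T)$-sparse and of three is $(3L,3K,T)$-sparse, capped componentwise by $(M,D,T)$; this is precisely why the hypothesis is phrased through $\delta_{3\times(L,K,T)}$ and why HiRIP at that level suffices.

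\textbf{Main recursion.} Writing $S := \supp(\bar{W})$, $S^{(i)} := \supp(\hat{\bar{W}}^{(i)})$, $u^{(i)} := \hat{\bar{W}}^{(i)} + \bar{\Psi}^H(\bar{X} - \bar{\Psi}\hat{\bar{W}}^{(i)})$, and $\mathcal{A}^{(i+1)} = \mathcal{T}_{(L,K,T)}(u^{(i)})$, I would establish the two standard estimates. (a) \emph{Support capture:} since $S$ is $(L,K,T)$-sparse and $\mathcal{A}^{(i+1)}$ maximizes the captured energy among all such supports, the optimality property gives $\Vert u^{(i)}_{S}\Vert \le \Vert u^{(i)}_{\mathcal{A}^{(i+1)}}\Vert$, hence $\Vert u^{(i)}_{S \setminus \mathcal{A}^{(i+1)}}\Vert \le \Vert u^{(i)}_{\mathcal{A}^{(i+1)}\setminus S}\Vert$. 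Substituting $u^{(i)} = \hat{\bar{W}}^{(i)} + \bar{\Psi}^H\bar{\Psi}(\bar{W} - \hat{\bar{W}}^{(i)}) + \bar{\Psi}^H\bar{Z}$ and controlling the off-diagonal action of $\bar{\Psi}^H\bar{\Psi}$ via HiRIP on the $3\times(L,K,T)$-sparse unions $S \cup S^{(i)} \cup \mathcal{A}^{(i+1)}$ bounds $\Vert \bar{W}_{S \setminus \mathcal{A}^{(i+1)}}\Vert$ by $\delta_{3\times(L,K,T)}\Vert \bar{W} - \hat{\bar{W}}^{(i)}\Vert$ plus a multiple of $\Vert\bar{Z}\Vert$. (b) \emph{Least-squares step:} since $\hat{\bar{W}}^{(i+1)}$ solves the restricted least-squares problem on $\mathcal{A}^{(i+1)}$ and $\bar{W} - \hat{\bar{W}}^{(i+1)}$ is supported on the $2\times(L,K,T)$-sparse set $S \cup \mathcal{A}^{(i+1)}$, HiRIP bounds $\Vert \bar{W} - \hat{\bar{W}}^{(i+1)}\Vert$ in terms of $\Vert \bar{W}_{S\setminus\mathcal{A}^{(i+1)}}\Vert$ and $\Vert\bar{Z}\Vert$. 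Chaining (a) and (b) yields $\Vert \bar{W} - \hat{\bar{W}}^{(i+1)}\Vert \le \kappa \Vert \bar{W} - \hat{\bar{W}}^{(i)}\Vert + c\,\Vert\bar{Z}\Vert$ with $\kappa$ as stated; iterating and summing the geometric series in $\kappa$ then produces the $\kappa^i$ factor and the aggregate noise constant $\tau \le 5.15/(1-\kappa)$.

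\textbf{Main obstacle.} The analytic part -- the RIP inequalities, triangle inequalities, and the geometric summation -- is identical to the scalar HTP argument once the sparsity orders are correct, so it poses no real difficulty. The crux is entirely structural: proving that the efficiently computable nested operator $\mathcal{T}_{(L,K,T)}$ actually realizes the \emph{global} best $(L,K,T)$-sparse $\ell_2$-approximation (so that the optimality inequality in (a) is valid), and confirming that every support union arising in the argument remains hierarchically sparse at level $3\times(L,K,T)$ so that $\delta_{3\times(L,K,T)}$ controls it. With these two facts in hand -- both supplied by \cite{Roth2016_TSP} -- the hierarchical structure never interferes with the analysis, and the classical contraction argument goes through unchanged.
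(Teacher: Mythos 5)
Your outline is correct and is essentially the argument behind the theorem as given in the cited source \cite{Roth2016_TSP} (the paper itself states Theorem~\ref{thm:HiHTP} without proof): one reruns Foucart's HTP contraction analysis verbatim after establishing exactly the two hierarchical ingredients you isolate, namely that the nested greedy thresholding $\mathcal{T}_{(L,K,T)}$ realizes the global best hierarchically sparse $\ell_2$-approximation and that unions of up to three $(L,K,T)$-sparse supports remain $3\times(L,K,T)$-sparse, so that the HiRIP constant at that level controls every step. One minor observation: as printed, $\kappa=\left(2\delta/(1-\delta^{2})\right)^{1/2}$ is not below $1$ for all $\delta<1/\sqrt{3}$; the constant your recursion actually produces (and the one in \cite{Roth2016_TSP}) is $\left(2\delta^{2}/(1-\delta^{2})\right)^{1/2}$, so the numerator in the statement is missing a square.
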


It follows that, in order to ensure accurate recovery of $\bar{W}$ via the
HiHTP algorithm and assuming $M>3L$, $N>3K$, we need to estimate the
$(3L,3K,T)$-HiRIP constants for $\bar{\Psi}$. %
%
%
We will use the following bound for general Kronecker-type sensing
matrices \cite{RothEtAl2017}.

\begin{theorem}
\label{thm:HiRIP} Suppose $\tilde{\Psi}:=M_{1} \otimes M_{2} \otimes
\dots\otimes M_{\ell}$, where $M_{k}\in
\mathbb{C}^{O_{k}\times N_{k}}$ for all $k$. Further suppose that, for each $k$, $M_{k}$ has $s_{k}$-sparse RIP with constant
$\delta_{s_{k}}$. Then, with $\mathbf{s}=(s_1,\ldots,s_k)$, $\tilde{\Psi}$ satisfies (\ref{eq:RIPequation}) with a HiRIP
constant
\begin{equation}
\delta_{\mathbf{s} }\leq\prod_{k=1}^{l} (1+ \delta_{s_{k}})-1.
\end{equation}

\end{theorem}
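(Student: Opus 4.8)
The plan is to prove the bound by induction on the number $\ell$ of Kronecker factors, peeling off one factor at a time and invoking only the corresponding level-wise RIP. The base case $\ell=1$ is immediate: for a single factor the HiRIP condition \eqref{eq:RIPequation} coincides with the ordinary $s_1$-sparse RIP of $M_1$, so $\delta_{(s_1)}=\delta_{s_1}=(1+\delta_{s_1})-1$, matching the claim.

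For the inductive step I would write $\tilde{\Psi}=M_1\otimes\Psi'$ with $\Psi':=M_2\otimes\cdots\otimes M_\ell$, and take $\tilde{X}$ to be $\mathbf{s}$-sparse. Reshaping $\tilde{X}=\text{vec}(X)$ with $X\in\mathbb{C}^{(N_2\cdots N_\ell)\times N_1}$ turns the $N_1$ blocks of $\tilde{X}$ into the columns of $X$; the hierarchical sparsity then says precisely that $X$ has at most $s_1$ non-zero columns, indexed by a set $S_1$ with $\abs{S_1}\le s_1$, each of which is $(s_2,\ldots,s_\ell)$-sparse. Using the identity $(M_1\otimes\Psi')\text{vec}(X)=\text{vec}(\Psi'XM_1^T)$ gives $\Vert\tilde{\Psi}\tilde{X}\Vert^2=\Vert\Psi'XM_1^T\Vert^2$, and I would estimate this Frobenius norm in two stages.

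The first stage applies the $s_1$-RIP of $M_1$ on the right. Setting $C:=\Psi'X$, the crucial observation is that left-multiplication by $\Psi'$ preserves the column support: if column $j$ of $X$ vanishes then so does column $j$ of $C$, hence every row of $C$ is supported on $S_1$ and is therefore $s_1$-sparse. Writing $\Vert CM_1^T\Vert^2=\sum_r\Vert M_1 c_r^T\Vert^2$ over the rows $c_r$ of $C$ and applying the $s_1$-RIP to each $s_1$-sparse $c_r$ yields $\Vert\Psi'XM_1^T\Vert^2\in(1\pm\delta_{s_1})\,\Vert\Psi'X\Vert^2$. The second stage applies the induction hypothesis to $\Psi'$: since $\Vert\Psi'X\Vert^2=\sum_j\Vert\Psi'x_j\Vert^2$ and each column $x_j$ is $(s_2,\ldots,s_\ell)$-sparse, the HiRIP constant $\delta'$ of $\Psi'$ gives $\Vert\Psi'X\Vert^2\in(1\pm\delta')\,\Vert X\Vert^2$. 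Chaining the two bounds and using $\Vert X\Vert=\Vert\tilde{X}\Vert$ produces $\delta_{\mathbf{s}}\le(1+\delta_{s_1})(1+\delta')-1$, and the induction hypothesis $1+\delta'\le\prod_{k=2}^{\ell}(1+\delta_{s_k})$ closes the recursion to the claimed product bound; a short check confirms the lower factor $(1-\delta_{s_1})(1-\delta')$ likewise stays above $1-\delta_{\mathbf{s}}$.

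I expect the main obstacle to be arranging the peeling in the correct order. The naive attempt, applying $\Psi'$ first and then $M_1$ while trying to reuse the sparsity of $X$, fails because after left-multiplication the relevant column vectors live on the \emph{union} of the individual supports, of size up to $s_1 s_2\cdots s_\ell$, for which no RIP is available. The whole argument hinges on the asymmetry that the outer factor $M_1$ acts on the block index, where the pattern is exactly $s_1$-sparse and is left untouched by $\Psi'$, so that only the level-wise RIP constants are ever needed. Making this support-preservation step precise, together with fixing the vec/Kronecker convention consistently throughout, is the one point that genuinely requires care.
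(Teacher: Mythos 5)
Your proof is correct and complete: the reshaping $\tilde{X}=\text{vec}(X)$, the identity $(M_1\otimes\Psi')\text{vec}(X)=\text{vec}(\Psi' X M_1^T)$, the observation that left-multiplication by $\Psi'$ preserves the column support so that every row of $\Psi'X$ is $s_1$-sparse, and the chaining of the two RIP estimates (including the check on the lower factor) all hold, and the induction closes to $\prod_k(1+\delta_{s_k})-1$. The paper itself gives no proof of Theorem~\ref{thm:HiRIP}, quoting it from \cite{RothEtAl2017}, but your peel-one-factor induction is precisely the argument used there, so this is essentially the same approach.
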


The above theorem implies that sensing matrices resulting by Kronecker
products will have HiRIP, provided each of the constituent matrices has the
(standard) RIP.%
Notably, it is important to emphasize that while HiRIP is attainable, RIP may
actually not: For example, it can be shown \cite{JokarMehrmann} that the
(standard RIP-properties of a Kronecker product $M_{1}\otimes\dots\otimes
M_{\ell}$ cannot be better than the $RIP$-properties of the weakest matrix (with respect to the total sparsity!)
$M_{i}$, since $\delta_{\sum_k s_k}(M_{1}\otimes\dots\otimes M_{\ell})\geqsim\max
_{i=1}^{\ell}\delta_{\sum_k s_k}(M_{i})$. Hence, to consider the HiRIP / HiHTP
framework, instead of the standard RIP framework, is \emph{inevitable} to
obtain recovery guarantees when using sensing matrices we consider in this publication.

\subsection{Numerical Results}
This section demonstrates the effectiveness of the HiHTP in obtaining highly accurate channel estimates with limited pilot overhead. In all cases, an OFDM system with $N=64$ subcarriers and $M=16$ antennas at the base station is considered. The ``spatial-frequency'' transfer matrix of the user in consideration is represented as in (\ref{eq:channel_delay_angle_decomposition}) and (\ref{eq: W on grid}), i.e., with ``on grid'' angle/delay values for each path, with $L=3$. The angle/delay values of each path remain constant for the $T$ slots considered in the estimation procedure, whereas the channel gains are independent and identically distributed (i.i.d.) over slots. For each slot, the channel gains are generated as i.i.d. complex Gaussian variables with a total power $\sum_{p=0}^{L-1}.  \mathbb{E}(|\rho_p(t)|^2)=1$. The path angles $\{\theta_p\}_{p=0}^{L-1}$ are generated independently and uniformly over the angle sampling grid, however, no two paths are allowed to have the same angle. The path delays are independent and uniformly distributed over the delay sampling grid with $D=16$. For this moderate number of antennas it is reasonable to consider all of them for channel estimation purposes, i.e., $O_\theta=M$, however, the number of (randomly) selected pilot subcarriers $O_\theta$ is a design variable.

Figure \ref{fig:MSE_vs_O_tau} depicts the (average) mean squared error (MSE), $\frac{1}{NM}\mathbb{E}\|H(t) - \hat{H}(t) \|^2$, of the space-frequency transfer matrix estimate obtained as $\hat{H}(t):= A_\tau \hat{W}(t) A_\theta^H$ where $\hat{W}(t)$ is the estimate of the delay-angular channel representation at slot $t$ provided by HiHTP. The MSE is depicted as a function of the training overhead $O_\tau/N$ and for various values of observed slots $T$. The signal-to-noise ratio (SNR) was set equal to $1/\sigma^2=0$ dB. It can be seen that HiHTP offers excellent estimation accuracy for a very limited training overhead. For example, for $T=1$, a training overhead of around $0.08$ is sufficient to achieve a MSE that is almost one order of magnitude less than the noise level. This overhead should be compared with conventional estimation approaches which would require a pilot overhead in the order of $D/N=0.25$. As expected, jointly considering $T>1$ slots improves performance as the algorithm incorporates the common support of the delay-angular representation of the channel over slots. The gain of increasing $T$ is more visible in the low training overhead regime, with $T=4$ sufficient to achieve almost all of the possible gain. As a comparison, the performance of the standard HTP algorithm \cite{MathIntroToCS} is depicted. It can be seen that, for small training overhead, HTP performance is significant worse than HiHTP, exactly due to not taking into account the hierarchical sparsity of the channel. For sufficiently large training overhead, the performance of both algorithms are the same, implying that knowledge of the sparsity structure plays no role, exactly analogous to standard estimation theory where a priori information becomes irrelevant once sufficiently many observations are obtained.
\begin{figure}
	\centering
	\includegraphics[width=\columnwidth]{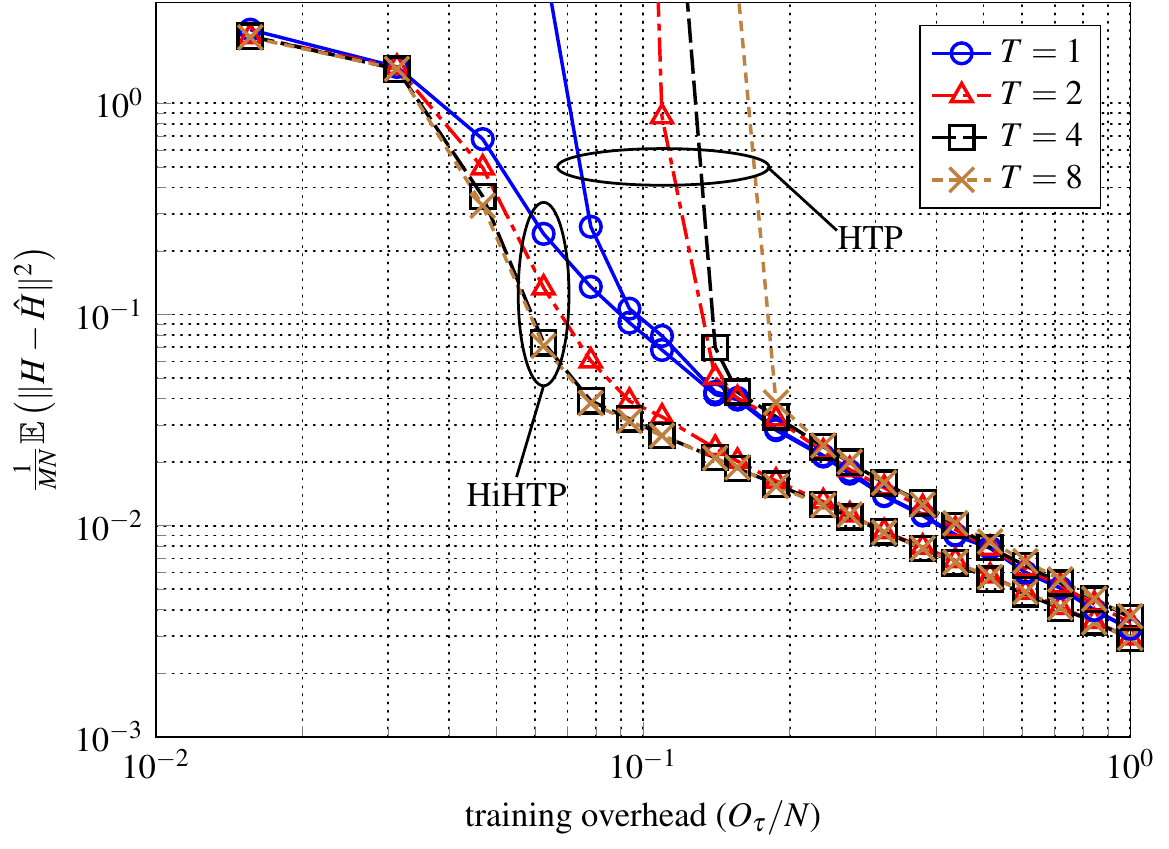}\caption{MSE of HiHTP/HTP as a function of training overhead, for various values of $T$ ($N=64, M=16, D=16, L=3, \mathrm{SNR} = 10 \text{ dB}, O_\theta=M$).}%
	\label{fig:MSE_vs_O_tau}%
\end{figure}

Figure \ref{fig:MSE_vs_SNR} shows the performance of HiHTP for a training overhead $O_\tau/N$ = 0.125 as a function of SNR and for various $T$. As expected MSE performance improves with SNR. Utilizing more than one slots is beneficial for further improving performance in the low SNR regime.
\begin{figure}
	\centering
	\includegraphics[width=\columnwidth]{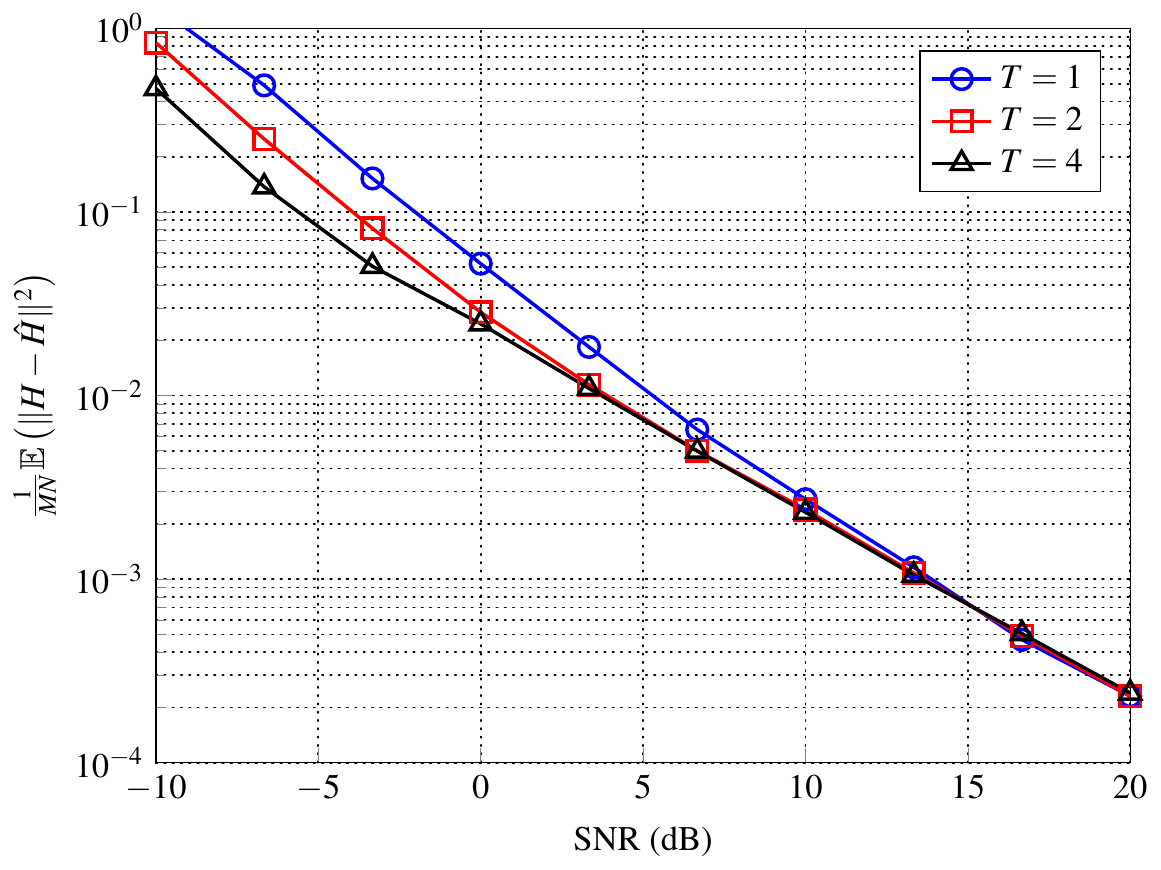}\caption{MSE of HiHTP as a function of SNR ($1/\sigma^2$) for various values of $T$ ($N=64, M=16, D=16, L=3, O_\tau=0.125 N, O_\theta = M$).}%
	\label{fig:MSE_vs_SNR}%
\end{figure}

\section{Hierarchical Sparsity and HiHTP Performance Under Basis Mismatch}

The previous section identified the structural properties of the delay/angular
channel representation which were exploited in order to obtain an efficient
estimation algorithm (HiHTP) and obtain performance guarantees using the
concept of HiRIP. However, the analysis considered the on-grid case, i.e.,
with the delay/angle values lying on the grid assumed by the steering matrices
$A_{\tau}$ and $A_{\theta}$. As mentioned in Sec. II, this is not the case in
general leading to basis mismatch effects. This section considers this case
with the analysis split into two parts: First, we show that the matrix
$\bar{W}$ can be approximated by a hierarchically sparse matrix even under
basis mismatch. Then, we provide conditions in terms of number of subcarriers
($O_{\tau}$) and number of antennas ($O_{\theta}$) that should be considered
in order to achieve (approximate) recovery of $W$ for a single user.

\subsection{Sparse Approximation Analysis}

As has been argued in the introduction, the matrix $W(t),t\in[T]$ is not
exactly sparse in general. In order to get a concrete recovery result for the
application of the HiHTP algorithm for this case as well, we need to quantify
how well it can be approximated as (hierarchically) sparse. In order to do
this, we first analyze the sparse approximation error for the vectors
$u_{\tau_{p}}$ and $u_{\theta_{p}}, p \in[L]$ appearing in
(\ref{eq:W_basis_mismatch}). For simplicity, the case $T=1$ will be considered
with the results trivially extending to the multiple measurements case.

The vector $u_{\theta}$, for $\theta$ arbitrary, is easily seen to be equal to
$(1/{M})F_{M}^{H} a(\theta)$. As shown in \cite{ChenYang206_TWC}, the
$i$-th element of $u_{\theta}$ equals
\begin{align}
(u_{\theta})_{i}  &  =\frac{\sin\left(  \pi M\xi_{i}\left(  \theta\right)
\right)  }{M\sin\left(  \pi\xi_{i}\left(  \theta\right)  \right)  }%
e^{-j\pi\left(  M-1\right)  \xi_{i}\left(  \theta\right)  }, i \in[M],
\end{align}
where $\xi_{i}\left(  \theta\right)  := \theta- i/M$ is a translation of the
angle $\theta$. In the exact same fashion and for any value of $\tau$, it
holds
\begin{align}
(u_{\tau})_{i}  &  =\frac{\sin\left(  \pi N\eta_{i}(\tau)\right)  }%
{N\sin\left(  \pi\eta_{i}\right)  }e^{-j\pi\left(  N-1\right)  \eta_{i}(\tau
)},i \in[N],
\end{align}
where $\eta_{i}(\tau):=\tau-i/N$. The sparse approximation properties of
$u_{\tau}$ and $u_{\theta}$ are identified in the following lemma.

\begin{lemma}
\label{lem:SparseAppr} For any value of $\theta$, there exists a $(2K_{\theta
}+1)$-sparse vector $u_{{\theta},{K_{\theta}}}\in\mathbb{C}^{M}$, with
integer $K_{\theta}\geq 1$ independent of $\theta$ and $M$, such that
\[
\norm{u_{\theta,{K_\theta}} -u_\theta}_{2}\lesssim\frac{1}{\sqrt{K_{\theta}}%
}.
\]
Similarly, for any value of $\tau$, there exists a $(2K_{\tau}+1)$-sparse
vector $u_{{\tau},{K_{\tau}}}\in\mathbb{C}^{D}$, with integer $K_{\tau}\geq 1$ independent
of $\tau$ and $D$, with
\[
\norm{u_{\tau,{K_\tau}} -u_\tau}_{2}\lesssim\frac{1}{\sqrt{K_{\tau}}}.
\]

\end{lemma}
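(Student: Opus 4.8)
The plan is to produce the sparse approximant by plain truncation and reduce the lemma to a convergent-series tail estimate. For $u_\theta$ I would keep the $2K_\theta+1$ entries whose indices $i$ are closest, in the circular sense on $\Z/M\Z$, to the peak location $M\theta$ (equivalently, by the magnitude computation below, the $2K_\theta+1$ largest-magnitude entries, which is exactly what the operator $\calT$ would select), and zero out the rest; call the result $u_{\theta,K_\theta}$. Since $u_{\theta,K_\theta}$ and $u_\theta$ agree off the discarded index set $\mathcal{D}$, the error is exactly the $\ell_2$-norm of the tail,
\[
\norm{u_{\theta,K_\theta}-u_\theta}_2^2=\sum_{i\in\mathcal{D}}\abs{(u_\theta)_i}^2 ,
\]
so the whole statement reduces to showing this tail is $O(1/K_\theta)$ with an absolute constant.

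The key simplification I would exploit first is that $M\xi_i(\theta)=M\theta-i$, whence $\sin(\pi M\xi_i)=(-1)^i\sin(\pi M\theta)$ and therefore the numerator magnitude $\abs{\sin(\pi M\xi_i)}=\abs{\sin(\pi M\theta)}\le 1$ is \emph{independent of} $i$. This gives $\abs{(u_\theta)_i}\le 1/(M\abs{\sin(\pi\xi_i)})$, so the magnitudes are monotone decreasing in the circular distance from the peak (confirming that nearest-to-peak and largest-magnitude coincide). Using the elementary bound $\abs{\sin(\pi x)}\ge 2\,\mathrm{dist}(x,\Z)$ together with the fact that, after removing the central $2K_\theta+1$ indices, for each offset $m>K_\theta$ there are at most two remaining grid points and each sits at circular distance at least $(2m-1)/(2M)$ from $\theta$, I would bound
\[
\sum_{i\in\mathcal{D}}\abs{(u_\theta)_i}^2\le\frac{1}{M^2}\sum_{i\in\mathcal{D}}\frac{1}{\abs{\sin(\pi\xi_i)}^2}\le\frac{1}{M^2}\sum_{m>K_\theta}\frac{2M^2}{(2m-1)^2}=2\sum_{m>K_\theta}\frac{1}{(2m-1)^2}\lesssim\frac{1}{K_\theta},
\]
the final step being the standard tail estimate of a convergent series. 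Taking square roots yields $\norm{u_{\theta,K_\theta}-u_\theta}_2\lesssim 1/\sqrt{K_\theta}$ with a constant depending on neither $\theta$ nor $M$, as required.

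The argument for $u_\tau$ goes through verbatim with $N$ playing the role of $M$; the only structural difference is that $u_\tau\in\C^{D}$ is the restriction to $[D]$ of the length-$N$ Dirichlet vector. Since $\tau\le\alpha T_s$ with $\alpha\le1$ places the peak index inside $[D]$ and we only ever discard entries, the same tail bound $\sum_{i\in\mathcal{D}}\abs{(u_\tau)_i}^2\lesssim 1/K_\tau$ holds (the absence of a full circular wraparound only shrinks the tail). The one point genuinely requiring care — and the main obstacle — is the geometric bookkeeping on the circular grid: I must verify that the retained $2K_\theta+1$ indices really are the closest ones and that the discarded distances grow like $m/M$ uniformly across all $\theta$, which is precisely what forces the implicit constant to be independent of both $\theta$ and $M$. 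Everything else is the elementary $\sin$-inequality and the series estimate.
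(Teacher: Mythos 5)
Your proof is correct and self-contained; the paper itself omits the proof of this lemma (deferring it to the journal version), so there is nothing to compare against, but your truncation-plus-tail argument is the natural one. The two pillars both check out: $\abs{\sin(\pi M\xi_i(\theta))}=\abs{\sin(\pi M\theta)}$ is indeed independent of $i$, and the circular-distance bookkeeping (at most two discarded indices per offset $m>K_\theta$, each at distance at least $(2m-1)/(2M)$ from $\theta$, combined with $\abs{\sin(\pi x)}\geq 2\,\mathrm{dist}(x,\mathbb{Z})$) gives a tail $\sum_{m>K_\theta}2(2m-1)^{-2}\leq 1/K_\theta$ with an absolute constant, uniformly in $\theta$ and $M$. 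The only point worth making explicit is the one you already flag for $u_\tau$: when $D<N$ the equation $b(\tau)=A_\tau u_\tau$ with $A_\tau=F_{N,D}$ is generally not exactly solvable for off-grid $\tau$, so $u_\tau$ must be understood as the restriction of the length-$N$ coefficient vector (consistent with the paper, which sets $D=N$ in its Appendix B); under that reading your argument goes through verbatim.
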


\begin{proof}
The proof is omitted, and deferred to \cite{Wunder2018_TWC}.
\end{proof}

The error estimates derived in Lemma \ref{lem:SparseAppr} are independent of
the design parameters $M$ and $N$, which means that increasing them results in
an increased \emph{relative sparsity} (number of non-zero to total elements)
for the sparse approximations of $u_{\theta}$ and $u_{\tau}$, respectively.
However, increasing $N$ requires a proportional bandwdith increase, which is
not always available. On the other hand, there are no such limitation on
increasing $M$, which can safely be assumed arbitrarily large.

We now use Lemma \ref{lem:SparseAppr} to quantify the error of approximating
$W(t)$ as having a hierarchical sparse structure. Before proceeding, we will
assume the following two properties for the delay/angle pairs of the channel
paths. The first property imposes a separation among distinct path angles.

\begin{condition}
[Angular separation]For any two distinct path angles $\theta_{i}, \theta_{j}$
($\theta_{i} \neq\theta_{j}$) in the set of delay/angle pairs, it holds
\begin{align*}
\abs{\theta_{i} -\theta_{j}} \geq2K_{\theta}/M,
\end{align*}
whereby $\abs{\cdot}$ is meant in a wrap-around sense over the interval
$[0,2\pi)$.
\end{condition}

Note that angular separations is a standard assumption in this area of
research, see for instance
\cite{candes2014towards,tan2014direction,heckel2014SuperResRadar} and can be
safely assumed to hold for sufficiently large $M$.

The second assumption essentially excludes the possibility of a channel with
excessively many paths having the same angle.

\begin{condition}
[Limited delays-per-angle]For each distinct angle $\theta_{\imath}$ in the set
of paths angle/delay values , there exists at most $K$ delays $\tau_{k}$ such
that $(\tau_{k}, \theta_{\imath})$ is in the set of delay/angle pairs.
\end{condition}

The delays-per-angle condition is furthermore very reasonable on physical
grounds: multipath components with different delays travel over different
paths, hence the probability of arriving at the ULA with the same angle is
very small. By the same argument one excepts that a choice $K=1$ is reasonable
for typical propagation conditions.

We can now characterize the error of approximating $W(t)$ (for any $t$) by a
matrix with a hierarchically sparse structure.

\begin{proposition}
\label{prop:sparsity} Consider an arbitrary slot $t\in\lbrack T]$. Under the
angular separation and delays-per-angle conditions, there exists a matrix
$W_{\Omega}(t)\in\mathbb{C}^{D\times M}$ whose vectorized version is
$(L(2K_{\theta}+1),K(2K_{\tau}+1))$-sparse and satisfies
\[
\norm{ W(t)- W_\Omega(t)}\leqsim\left(  K_{\theta}^{-1}+K_{\tau}^{-1}\right)
\sum_{p=0}^{L}\rho_{p}(t_{i})
\]

\end{proposition}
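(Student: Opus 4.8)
The plan is to exhibit an explicit candidate and then verify separately its hierarchical sparsity pattern and its approximation error. I would take
$$W_\Omega(t) := \sum_{p=0}^{L-1} \rho_p(t)\, u_{\tau_p,K_\tau}\, u_{\theta_p,K_\theta}^H,$$
where $u_{\tau_p,K_\tau}$ and $u_{\theta_p,K_\theta}$ are the sparse approximants furnished by Lemma \ref{lem:SparseAppr}. Since $W(t)=\sum_p \rho_p(t)\,u_{\tau_p}u_{\theta_p}^H$, this construction reduces the problem to controlling, path by path, the difference between each rank-one term $u_{\tau_p}u_{\theta_p}^H$ and its sparsified version.

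For the sparsity pattern I would first recall that, under the column-major vectorization, $\mathrm{vec}(W_\Omega(t))\in\mathbb{C}^{DM}$ splits into $M$ blocks of length $D$, one per angle (column), so that hierarchical $(s_1,s_2)$-sparsity amounts to at most $s_1$ nonzero angle-blocks, each supported on at most $s_2$ delays. The outer count is governed by the \emph{angular separation} condition: the construction behind Lemma \ref{lem:SparseAppr} yields approximants whose support is a window of $2K_\theta+1$ consecutive bins centred at the grid point nearest to $\theta_p$, and the separation $|\theta_i-\theta_j|\ge 2K_\theta/M$ forces the windows belonging to distinct path angles to be pairwise disjoint; hence at most $L(2K_\theta+1)$ angle-blocks are nonzero. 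The inner count is governed by the \emph{delays-per-angle} condition: a nonzero angle-block at column $i$ lies in the window of a single distinct angle, to which at most $K$ paths contribute, each adding a delay support of size at most $2K_\tau+1$, so the union has at most $K(2K_\tau+1)$ entries. This is exactly the claimed $(L(2K_\theta+1),K(2K_\tau+1))$-sparsity.

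For the error I would use the triangle inequality over paths together with the add-and-subtract splitting of each rank-one difference,
$$u_{\tau_p}u_{\theta_p}^H - u_{\tau_p,K_\tau}u_{\theta_p,K_\theta}^H = (u_{\tau_p}-u_{\tau_p,K_\tau})u_{\theta_p}^H + u_{\tau_p,K_\tau}(u_{\theta_p}-u_{\theta_p,K_\theta})^H,$$
combined with the identity $\|ab^H\|=\|a\|_2\|b\|_2$ for the Frobenius norm. Using the elementary bounds $\|u_{\theta_p}\|_2=1$ and $\|u_{\tau_p,K_\tau}\|_2\le\|u_{\tau_p}\|_2\le1$ (the latter because keeping the largest-magnitude entries cannot increase the norm), each rank-one difference is bounded by the sum of the two per-vector errors supplied by Lemma \ref{lem:SparseAppr}. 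Summing against $|\rho_p(t)|$ then gives $\|W(t)-W_\Omega(t)\|\lesssim (K_\theta^{-1/2}+K_\tau^{-1/2})\sum_p|\rho_p(t)|$, matching the stated bound up to the precise power of $K_\theta,K_\tau$ inherited from the lemma.

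The main obstacle is not the error estimate, which is routine once Lemma \ref{lem:SparseAppr} is available, but the structural counting: one must ensure the sparse approximants are \emph{localized} in a fixed window around the dominant bin, so that the angular separation condition can be invoked to force disjointness of the angle supports, and that distinct paths sharing an angle reuse the same window. In other words, the delicate step is to upgrade the abstract existence statement of Lemma \ref{lem:SparseAppr} to a localized construction and to check that this localization remains compatible with the angular separation and delays-per-angle conditions simultaneously; this is where the real work lies.
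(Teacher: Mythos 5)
Your proof is correct and follows essentially the same route as the paper's: the paper takes $W_\Omega(t)$ to be the restriction of $W(t)$ to a union of rectangles $I_p\times J_p$ centred at the nearest grid points (rather than the sum of sparsified rank-one terms), but the support counting via the angular-separation and delays-per-angle conditions and the add-and-subtract rank-one error estimate with $\Vert ab^H\Vert=\Vert a\Vert_2\Vert b\Vert_2$ are identical, and both arguments rely on the same implicit localization of the Lemma~\ref{lem:SparseAppr} approximants to windows of $2K+1$ consecutive bins. Your remark that the lemma only yields $K_\theta^{-1/2}+K_\tau^{-1/2}$ rather than the stated $K_\theta^{-1}+K_\tau^{-1}$ is also accurate --- the paper's own proof exhibits the same exponent mismatch.
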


\begin{proof}
Please see Appendix A.
\end{proof}

Clearly, $W(t),t\in[T]$, can be well approximated by a hierarchical sparse
matrix $W_\Omega(t)$ by choosing $K_{\theta}$ and $K_{\tau}$ sufficiently large. This, in
turn, suggests incorporation of the HiHTP algorithm, treating $W(t),t\in[T]$ as $(L(2K_{\theta}+1),K(2K_{\tau}+1))$-sparse.

\subsection{Error/overhead tradeoff of HiHTP algorithm}

We now want to identify the tradeoff between estimation accuracy and number of subcarriers
and antennas that should be considered in order to achieve it. The following result provides a rigorous characterization of the HiHTP
recovery guarantees, taking into account the probabilistic nature of the
sampling matrices and the error of approximating $\bar{W}(t)$ as hierarchically sparse.

\begin{theorem}
\label{thm: error bound off-grid case} Let the number of sampled antennas and subcarriers satisfy
\begin{align*}
O_{\theta}  &  \geqsim \delta^{-2} L(2K_{\theta}+1) \log^{4}(M),\\
O_{\tau}  &  \geqsim \delta^{-2} K(2K_{\tau}+1) \log^{4}(N),
\end{align*}
respectively, for some $\delta<1/\sqrt{3}$. Then, with a probability larger than $1-M^{-\log^{3}(M)}
-N^{-\log^{3}(N)}$, the sequence of estimates $\hat{\bar{W}}^{(i)}$ generated
by the HiHTP algorithm (Algorithm \ref{alg:HiHTP}) treating $\text{vec}(\bar{W})$ as   $(L(2K_{\theta}+1),K(2K_{\tau}+1),T)$-sparse, will obey the error bound
\begin{align*}
\norm{\bar{W}- \hat{\bar{W}}^{(i+1)}} \leq &  \kappa^{i}
\norm{\bar{W} - \hat{\bar{W}}^{(i)}} + \tau\norm{Z}\\
\  &  + C (K_{\theta}^{-1}+K_{\tau}^{-1}) \sum_{i=1}^{T} \sum_{p=0}^{L} \rho
_{p}(t_{i})
\end{align*}
where $\rho$, $\tau$ are as in Theorem \ref{thm:HiHTP}, and $C$ is a universal constant.
\end{theorem}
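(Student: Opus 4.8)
The plan is to reduce the statement to the abstract recovery guarantee of Theorem~\ref{thm:HiHTP} by (i) verifying its HiRIP hypothesis for the concrete Kronecker sensing matrix $\bar\Psi$ under the stated sampling conditions, and (ii) absorbing the off-grid mismatch quantified in Proposition~\ref{prop:sparsity} into an approximation residual. Write $\bar\Psi = M_1\otimes M_2\otimes I_T$ with $M_1:=\sqrt{1/O_\theta}\,P_\theta A_\theta^{\ast}$ and $M_2:=\sqrt{1/O_\tau}\,P_\tau A_\tau$, and let $\mathbf{s}$ denote the inflated sparsity $3\times(L(2K_\theta+1),K(2K_\tau+1),T)$ appearing in Theorem~\ref{thm:HiHTP}. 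The first task is to show $\delta_{\mathbf{s}}(\bar\Psi)<1/\sqrt{3}$ with the claimed probability; the second is to run the HiHTP iteration on the hierarchically sparse surrogate $\bar W_\Omega$ of Proposition~\ref{prop:sparsity} and transfer the guarantee back to $\bar W$.

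For the HiRIP bound I would invoke Theorem~\ref{thm:HiRIP}: since $\bar\Psi$ is a Kronecker product of three factors, it suffices to control the ordinary RIP of each factor. The factor $I_T$ trivially has RIP constant $0$. Because $A_\theta=F_M$ and $A_\tau=F_{N,D}$ are rescaled Fourier matrices, $M_1$ and $M_2$ are randomly row-subsampled bounded orthonormal systems, so the standard RIP estimate for such systems \cite{MathIntroToCS} guarantees that $O_\theta\gtrsim \delta^{-2} L(2K_\theta+1)\log^4 M$ rows make $M_1$ satisfy $3L(2K_\theta+1)$-RIP with constant at most $\delta$ except with probability $M^{-\log^3 M}$, and symmetrically for $M_2$ in $N$. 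Feeding these into Theorem~\ref{thm:HiRIP} gives $\delta_{\mathbf{s}}(\bar\Psi)\le(1+\delta)^2-1$. The one subtlety is that $(1+\delta)^2-1<1/\sqrt3$ forces a per-factor target slightly below $1/\sqrt3$; choosing the per-factor RIP constant proportional to $\delta$ only rescales the implicit constant in the sampling conditions and is harmless for the stated $\gtrsim$ bounds. A union bound over the two failure events yields the overall probability $1-M^{-\log^3 M}-N^{-\log^3 N}$, and the resulting $\kappa,\tau$ are those of Theorem~\ref{thm:HiHTP} evaluated at the achieved HiRIP constant.

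With the HiRIP hypothesis in force I would pass from exact-sparse to stable recovery by writing $\bar W=\bar W_\Omega+(\bar W-\bar W_\Omega)$, where $\bar W_\Omega$ is the $(L(2K_\theta+1),K(2K_\tau+1),T)$-sparse matrix supplied by Proposition~\ref{prop:sparsity} stacked over slots. Viewing $\bar W_\Omega$ as the true signal, the measurement reads $\bar X=\bar\Psi\bar W_\Omega+\bar Z'$ with effective noise $\bar Z':=\bar Z+\bar\Psi(\bar W-\bar W_\Omega)$, and Theorem~\ref{thm:HiHTP} bounds $\norm{\bar W_\Omega-\hat{\bar W}^{(i+1)}}$ by $\kappa^i\norm{\bar W_\Omega-\hat{\bar W}^{(i)}}+\tau\norm{\bar Z'}$. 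Two triangle inequalities then replace $\bar W_\Omega$ by $\bar W$ on both sides and split $\norm{\bar Z'}\le\norm{\bar Z}+\norm{\bar\Psi(\bar W-\bar W_\Omega)}$, leaving the genuinely new contributions $\norm{\bar W-\bar W_\Omega}$ and $\norm{\bar\Psi(\bar W-\bar W_\Omega)}$. Summing the per-slot estimate of Proposition~\ref{prop:sparsity} over $t$ controls the first by $(K_\theta^{-1}+K_\tau^{-1})\sum_{i}\sum_{p}\rho_p(t_i)$, which is exactly the shape of the claimed residual; since $\tau$ is a universal constant, it is absorbed into the final $C$.

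The hard part is the term $\norm{\bar\Psi(\bar W-\bar W_\Omega)}$, because the leakage error $\bar W-\bar W_\Omega$ is \emph{dense}. The naive estimate $\norm{\bar\Psi\,\cdot}\le\norm{\bar\Psi}_{\mathrm{op}}\norm{\cdot}$ is useless here: an RIP-normalised subsampled Fourier matrix has operator norm of order $\sqrt{MN/(O_\tau O_\theta)}$, and indeed the measured per-path tail $M_2(u_\tau-u_{\tau,K_\tau})$ is only $O(1)$, not $O(K_\tau^{-1/2})$, so the crude effective-noise bound destroys the $K^{-1}$ scaling. The resolution I would use is the standard consequence of RIP, $\norm{M_2 v}\le\sqrt{1+\delta}\big(\norm{v}+\norm{v}_{\ell_1}/\sqrt{s}\big)$, applied factorwise and tensorised through the Kronecker structure after expanding each path's mismatch as $u_{\tau,K_\tau}d_\theta^{H}+d_\tau u_{\theta,K_\theta}^{H}+d_\tau d_\theta^{H}$ with $d_\bullet$ the truncation tails of Lemma~\ref{lem:SparseAppr}. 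The $\ell_2$ part of this bound reproduces the Proposition's $(K_\theta^{-1}+K_\tau^{-1})\sum_p\rho_p$ scaling; the delicate accounting is the $\ell_1/\sqrt{s}$ part, which because of the slow $1/|i|$ Dirichlet-kernel decay of $u_\theta,u_\tau$ contributes at worst an extra logarithmic factor that I would absorb into the universal constant $C$. Collecting the geometric term, the noise term $\tau\norm{Z}$, and this single approximation residual gives the claimed bound.
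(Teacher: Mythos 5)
Your argument follows the same skeleton as the paper's Appendix B: per-factor RIP for the subsampled Fourier matrices via Theorem 12.31 of \cite{MathIntroToCS} under the stated sampling conditions, assembly into a HiRIP constant for $\bar\Psi$ through Theorem \ref{thm:HiRIP} with the same reparametrisation $\delta=\widetilde\delta(2+\widetilde\delta)$ and a union bound over the two failure events, and then Theorem \ref{thm:HiHTP} applied to the hierarchically sparse surrogate $W_{\widehat\Omega}$ from Proposition \ref{prop:sparsity} followed by triangle inequalities. The genuine difference is your final paragraph. The paper applies Theorem \ref{thm:HiHTP} to $W_{\widehat\Omega}$ and writes the noise term as $\tau\norm{Z}$, never confronting the fact that once $W_{\widehat\Omega}$ is declared the signal the effective noise is $\bar Z+\bar\Psi(\bar W-W_{\widehat\Omega})$; the dense leakage term is silently dropped. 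You correctly identify this as the hard step, note that the operator-norm bound on $\bar\Psi$ is useless here, and propose the standard RIP consequence $\norm{Mv}\leq\sqrt{1+\delta}\left(\norm{v}+\norm{v}_{\ell_1}/\sqrt{s}\right)$ tensorised through the Kronecker factors. That is the right tool and it closes a real gap in the published proof. One caveat: since the Dirichlet tails give $\norm{d_\theta}_{\ell_1}\sim\log M$ and $\sqrt{s}\sim\sqrt{K_\theta}$, the residual you actually obtain scales like $\left(\log(M)K_\theta^{-1/2}+\log(N)K_\tau^{-1/2}\right)\sum_p\rho_p$, so the extra logarithm depends on $M$ and $N$ and cannot literally be absorbed into a \emph{universal} constant $C$, and the rate is $K^{-1/2}$ rather than the stated $K^{-1}$. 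This discrepancy is already present in the paper itself (Proposition \ref{prop:sparsity} asserts $K^{-1}$ while its proof via Lemma \ref{lem:SparseAppr} only delivers $K^{-1/2}$), so your version is more careful than, and at least as strong as, the original argument.
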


\begin{proof}
Please see Appendix B.
\end{proof}

\section{Conclusion}
In this paper, we explored hierarchical sparse estimation framework for massive MIMO
channel estimation. The framework can be used to design appropriate algorithms exploiting
the sparse nature of massive MIMO channel in joint angular and delay domain. In numerical simulations
we show the benefit in terms of reduced pilot overhead, particular for small overhead numbers. We
also extend our analysis to the general 'off-grid' case and derive the sufficient pilot overhead scaling for
perfect recovery for large signal dimension.

\section*{Acknowledgment}

AF and GK acknowledge support from the DFG (Grant KU
1446/18-1), IR and JE from the DFG (EI 519/9-1), the Templeton
Foundation and the ERC (TAQ), MB from the DFG (CA 1340/1-1 and WU 598/7-1,
SH and GC from the DFG (CA 1340/1-1), and GW from the DFG (WU 598/7-1 and WU 598/8-1).
All DFG projects are within the German priority program on 'Compressed Sensing
in Information Processing' (COSIP).

GW acknowledges also support from H2020 project ONE5G (ICT-760809)
receiving funds from the European Union.
The authors would like to acknowledge the contributions of their colleagues
in the project, although the views expressed in this contribution
are those of the author and do not necessarily represent the project.

\appendices

\section{Proof of Proposition \ref{prop:sparsity}}

Let us drop the index $t_{i}$. For each delay/angle pair $(\tau_{p},
\theta_{p})$, we define $\imath_{p}$ and $\jmath_{p}$ through $\imath_{p} :=
\argmin_{i} \abs{ i/{QN} - \tau_p}, \jmath_{p} := \argmin_{j}
\abs{j/M - \theta_p}$, and a rectangle $\Omega_{p} := [\imath_{p} - K_{\theta
}, \imath_{p} +K_{\theta}] \times[\jmath_{p} - K_{\tau}, \jmath_{p} +K_{\tau
}]:= I_{p} \times J_{p} \sse [M] \times[QN]$. Finally define $\Omega$ through
\begin{align*}
\Omega= \bigcup_{p=0}^{L} \Omega_{p}.
\end{align*}
Due to the angular separation and delay-per angle condition, $\Omega$ is an
$(L(2K_{\theta}+1), K(2K_{\tau}+1))$-sparse support (see Figure
\ref{fig:SparsityPattern}.)

\begin{figure}[ptb]
\centering\includegraphics[width=.6\columnwidth]{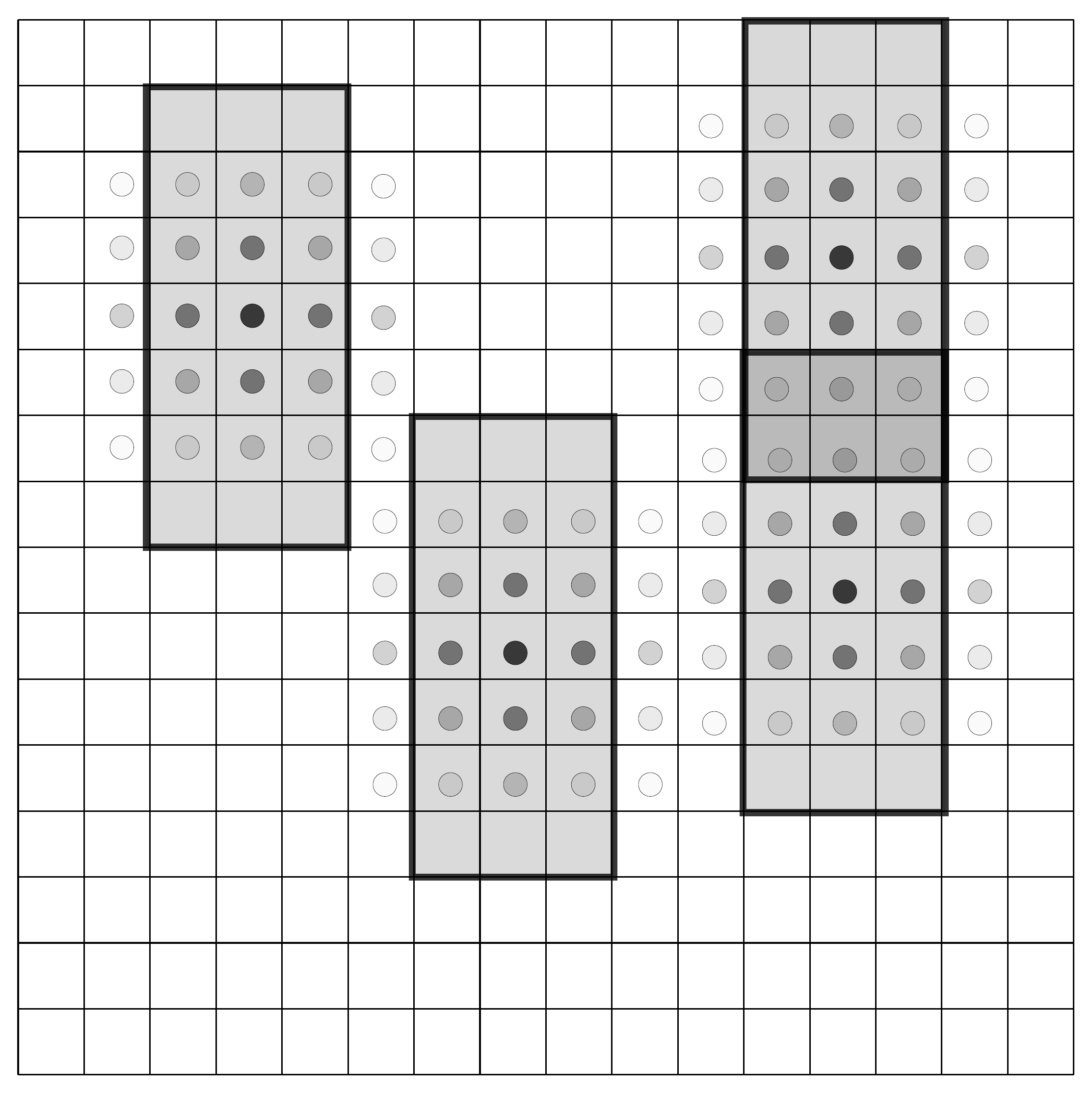} \caption{The
angular separation condition prevents the rectangles $\Omega_{p}$ to intersect
when they are associated to different values of $\theta$. It does not prevent
intersection for different values of $\tau$, but this does not prevent
$K(K_{\tau}+1)$-sparsity in each $\theta$-block. }%
\label{fig:SparsityPattern}%
\end{figure}

Now we estimate
\[
\norm{ W(t_i)- W_\Omega(t_i)}\leq\sum_{p=0}^{L}\rho_{p}(t_{i}%
)\norm{u_{\tau_p} u_{\theta_p}^H- \left(u_{\tau_p}u_{\theta_p}^H\right)_{\Omega_p}}.
\]
For each $p$, we estimate
\begin{align*}
&
\norm{u_{\tau_p} u_{\theta_p}^H- \left(u_{\tau_p}(u_{\theta_p})^H\right)_{\Omega_p}}\\
&  \quad\leq
\norm{u_{\tau_p} u_{\theta_p}^H- \left(u_{\tau_p}\right)_{J_p}(u_{\theta_p})^H}\\
&  \quad\quad
+\norm{(u_{\tau_p})_{J_p} (u_{\theta_p})^H-(u_{\tau_p})_{I_p}(u_{\theta_p})_{I_p})^H}\\
&  \quad
=\norm{u_{\tau_p}-\left(u_{\tau_p}\right)_{J_p}}\norm{u_{\theta_p}^H}\\
&  \quad\quad
+\norm{(u_{\tau_p})_{J_p}}\norm{ (u_{\theta_p})^H-(u_{\theta_p})_{I_p})^H}\\
&  \quad\leqsim K_{\theta}^{-1}+K_{\tau}^{-1}.
\end{align*}
We used that $\norm{u_{\tau}}=\norm{u_{\theta}}=1$ for each $\tau$ and
$\theta$, together with Lemma \ref{lem:SparseAppr}. The claim follows.

\section{Proof of Theorem \ref{thm: error bound off-grid case}}

Let $\Omega$ be the $(L(2K_{\theta}+1), K(2K_{\tau}+1))$-sparse support set
defined in Proposition \ref{prop:sparsity}. Define the set $\widehat{\Omega}
=[0,T-1] \times\Omega$. $\widehat{\Omega}$ is then $(T, L(2K_{\theta}+1),
K(2K_{\tau}+1))$-sparse, and further
\begin{align}
\norm{\bar{W} - W_{\widehat{\Omega}}} \leqsim (K_{\theta}^{-1}+K_{\tau}^{-1})%
\sum_{i=1}^{T} \sum_{p=0}^{L} \rho_{p}(t_{i}). \label{eq:errorBound}%
\end{align}

Now, we utilize that the matrices $A_{\theta}$ and $A_{\tau}$ are quadratic 
DFT-matrices (set $D=N$). Applying Theorem 12.31 from \cite[p.405]%
{MathIntroToCS} together with the assumptions on $O_{\theta}$ and $O_{\tau}$
yields that

\begin{itemize}
\item with probability larger than $1-M^{-\log^{3}(M)}$
\begin{align*}
\delta_{3 L(2K_{\theta}+1)} \left(  \frac{1}{\sqrt{O_{\theta}}}P_{\theta
}A_{\theta}^{*}\right)  \leq\widetilde{\delta},
\end{align*}

\item with probability larger than $1-N^{-\log^{3}(N)}$
\[
\delta_{3K(2K_{\tau}+1)}\left(  \frac{1}{\sqrt{O_{\tau}}}P_{\tau}A_{\tau
}\right)  \leq\widetilde{\delta},
\]

\end{itemize}

where $\widetilde{\delta}$ is defined through $\delta= \widetilde{\delta
}(2+\widetilde{\delta})$, which in particular implies that $\widetilde{\delta}
\leq\delta/2\leq1/(2\sqrt{3})$, so that
\begin{align*}
\delta\geqsim \widetilde{\delta}.
\end{align*}
Hence, an estimate of the form $\geqsim \delta^{-2}$ implies an estimate of
the form $\geqsim \widetilde{\delta}^{-2}$ (with another constant), whence the
theorem is applicable.

Since further $I_{T}$ trivially obeys $\delta_{T}=0$, Theorem \ref{thm:HiRIP}
implies that
\begin{align*}
\delta_{(3 \cdot L(2K_{\theta}+1),3 \cdot(L_{2}(2K_{\tau}+1), T)}(\Psi)  &
\leq(1+\delta)^{2} -1 = \delta(\delta+2)\\
&  \leq\frac{1}{\sqrt{3}}.
\end{align*}
Theorem \ref{thm:HiHTP} implies that
\begin{align*}
\Vert W_{\widehat{\Omega}}-W_{\widehat{\Omega}}^{\left(  t+1\right)  }%
\Vert\leq\rho^{t}\Vert W_{\widehat{\Omega}}-W_{\widehat{\Omega}}^{\left(
t\right)  }\Vert+\tau\Vert Z\Vert.
\end{align*}
The claim now follows from
\begin{align*}
\norm{W-W^{(r)}} \leq\Vert W_{\widehat{\Omega}}-W^{\left(  r\right)  } \Vert+
\norm{W - W_{\widehat{\Omega}}}
\end{align*}
for $r=0$ and $r=(t+1)$ and the error bound \eqref{eq:errorBound}.

\bibliographystyle{IEEEtran}
\bibliography{mMIMOce}

\end{document}